\newtheorem{thm}{Theorem}[section]
\newtheorem{theorem}[thm]{Theorem}
\newtheorem{lemma}[thm]{Lemma}
\newtheorem{proposition}[thm]{Proposition}
\newtheorem{corollary}[thm]{Corollary}
\newtheorem{eg}{Example}[section]
\newtheorem{fact}{Fact}
\newcommand{\ket}[1]{|#1\rangle}
\newcommand{\bra}[1]{\langle #1|}
\newcommand{\braket}[2]{\left \langle #1 \middle| #2 \right \rangle}
\numberwithin{equation}{section} \errorcontextlines=0
\begin{document}
\title{Qubits as Hypermatrices and Entanglement}
\author{Isaac Dobes and Naihuan Jing}
\address{Department of Mathematics, North Carolina State University, Raleigh, NC 27695, USA}
\email{idobes@ncsu.edu}
\email{jing@math.ncsu.edu}
\date{\vspace{-5ex}}

\begin{abstract}
    In this paper, we represent $n$-qubits as hypermatrices and consider various applications to quantum entanglement. In particular, we use the higher-order singular value decomposition of hypermatrices to prove that the $\pi$-transpose is an LU invariant. Additionally, through our construction we show that the matrix representation of the combinatorial hyperdeterminant of $2n$-qubits can be expressed as a product of the second Pauli matrix, allowing us to derive a formula for the combinatorial hyperdeterminant of $2n$-qubits in terms of the $n$-tangle. 
\end{abstract}
\maketitle

\section{Introduction}
For the last few decades, classifying entangled states has been a major endeavor for researchers in theoretical quantum information \cite{CN,BZ,YM,MW}. For bipartite quantum systems, the theory of entanglement is well understood and established \cite{Concurrence}, however for multipartite systems, the very notion of entanglement is still being worked out \cite{EGW,ZVSW, SLOCC,CJ}. As such, much of the focus has been on better understanding and expanding the theory of entanglement in multipartite systems \cite{EGW}. 

Two pure states are considered equivalently entangled if they are \textit{locally unitarily (LU) equivalent}; if $\ket{\psi}$ and $\ket{\varphi}$ are $n$-qubit states, then this means that there exist $U_1,...,U_n\in SU(2)$ such that 
\begin{equation}
\ket{\varphi} = (U_1\otimes...\otimes U_n)\ket{\psi}.
\end{equation}
Thus, it is of great importance to find operations on states that are invariant under local unitary equivalence in the classification of entangled states. In this paper, we represent pure $n$-qubits as hypermatrices and apply the theory of multilinear algebra to these states to study LU invariants. Specifically, we consider the higher-order singular value decomposition of hypermatrices \cite{LMV,CZCH} and show from our representation that the $\pi$-transpose is an LU-invariant. Next, we prove a formula relating the matrix of the hyperdeterminant of an arbitrary $2n$-qubit to the tensor product of the second Pauli matrix, which then allows us to express the $n$-tangle \cite{CW} in terms of the hyperdeterminant. This shows that in some sense, the hyperdeterminant provides a measurement of entanglement.


\section{Preliminaries}\label{S2}
Let $\mathbb C^n$ be the complex $n$-dimensional vector space. Let $v_i\in \mathbb{C}^{n_i}$ be $N$ vectors, where $(v_i)_j$ are the
$j$th coordinates of $v_i$. The \textit{outer product} of $v_1$,$v_2,...,v_N$ is defined to be the hypermatrix $v_1\circ v_2\circ...\circ v_N\in \mathbb{C}^{n_1\times n_2\times...\times n_N}$ of order $N$ whose $(i_1i_2...i_N)$-coordinate is given by $(v_1)_{i_1}(v_2)_{i_2}...(v_N)_{i_N}$. 

Now, let $H\in \mathbb{C}^{n_1\times n_2\times...\times n_N}$ be a hypermatrix 
and $A_i\in \mathbb{C}^{m_i\times n_i}$ be $N$ rectangular matrices. The \textit{multilinear multiplication} of $(A_1,A_2,...,A_N)$ with $H$ is defined to be the hypermatrix
$(A_1,A_2,...,A_N)*H =: H'$, 
where 
\begin{equation}H'_{i_1i_2...i_N} = \sum\limits_{j_1,j_2,...,j_N=1}^{n_1,n_2,...,n_N}(A_1)_{i_1j_1}(A_2)_{i_2j_2}...(A_N)_{i_Nj_N}H_{j_1j_2...j_N}.
\end{equation}
Multilinear multiplication is linear in terms of the matrices in both parts; that is, if $\alpha,\beta\in \mathbb{C}$, $A_1,B_1\in \mathbb{C}^{m_1\times n_1}$; $A_2,B_2\in\mathbb{C}^{m_2\times n_2}$;...; $A_N,B_N\in \mathbb{C}^{m_N\times n_N}$; and $H,K\in \mathbb{C}^{n_1\times n_2\times...\times n_N}$; then 
\begin{equation}
(A_1,A_2,...,A_N)*(\alpha H + \beta K) = \alpha(A_1,A_2,...,A_N)*H + \beta(B_1,B_2,...,B_N)*K
\end{equation}
and 
\begin{equation}
[\alpha(A_1,A_2,...,A_N)+\beta(B_1,B_2...,B_N)]*H = \alpha(A_1,A_2,...,A_N)*H + \beta(B_1,B_2,...,B_N)*H.
\end{equation}
Multilinear multiplication interacts with the outer product in the following way:
\begin{equation}
(A_1,A_2,...,A_N)*\left(\sum\limits_{k=1}^r\alpha_k(v_1^{(k)}\circ v_2^{(k)}\circ...\circ v_N^{(k)})\right) = \sum\limits_{k=1}^r\alpha_k(A_1v_1^{(k)})\circ (A_2v_2^{(k)})\circ...\circ (A_Nv_N^{(k)})
\end{equation}
(where $\alpha_k\in \mathbb{C}$ and $v_i^{(k)}\in \mathbb{C}^{n_i}$). For more details on these and other operations on hypermatrices, the reader is referred to \cite{LH}.

We also need the notion of higher-order singular value decomposition. 
The \textit{higher-order singular value decomposition}, first discovered in \cite{LMV}, states that any hypermatrix $H\in \mathbb C^{n_1\times n_2\cdots\times n_N}$ can be written as
\begin{equation}H = (V_1,V_2,...,V_N)*\Sigma
\end{equation}
where each $V_k$ is an $n_k\times n_k$ unitary matrix for $1\leq k\leq N$ and $\Sigma$ is an $n_1\times n_2\times...\times n_N$ hypermatrix 
such that for each $\Sigma_{i_k=\alpha}$, obtained by fixing the $k^{th}$ index to $\alpha$, satisfies:
\begin{enumerate}
    \item the all-orthogonality that $\langle \Sigma_{i_k=\alpha},\Sigma_{i_k=\beta}\rangle = 0$ for all $1\leq k\leq N$ and $\alpha\neq \beta$, where $\langle\text{ , }\rangle$ is the Frobenius inner product, and 
    \item the ordering that $\|\Sigma_{i_k=1}\|\geq\|\Sigma_{i_k=2}\|\geq...\geq\|\Sigma_{i_k=n_k}\|\geq 0$ for $1\leq k\leq N$, where $\|\cdot\|$ is the Frobenius norm.
\end{enumerate}
We call $\Sigma$ a {\it core tensor} of $H$, and $\Sigma_{i_k=j}$ {\it subtensors} of $\Sigma$. We also call $\|\Sigma_{i_k=j}\|:=\sigma_j^{(k)}$ the {\it $k$-mode singular values} of $H$. It is known that
the $k$-mode singular values are unique \cite{LMV}; that is,
\begin{equation}
H\mapsto \{\text{$k$-mode singular values of H}\}
\end{equation}
is a well-defined function. Note that When $N=2$, the higher-order singular value decomposition reduces to the typical matrix singular value decomposition. 

Indeed, we may express the higher-order singular value entirely in terms of matrices by considering the $k$-mode unfolding. Recall that the \textit{$k$-mode unfolding} \cite{KB} of a hypermatrix $H\in \mathbb{C}^{n_1\times n_2\times...\times n_N}$ is the $n_k\times (n_{k+1}...n_Nn_1...n_{k-1})$ matrix, denoted $H_{(k)}$, whose $(i_k,j)$ entry is given by $(i_1,...,i_N)$-entry of $H$, with
\begin{equation}
j=1+\sum\limits_{\substack{l=1 \\ l\neq k}}^N\left[(i_l-1)\prod\limits_{\substack{m=1 \\ m\neq k}}^{l-1}n_m\right],
\end{equation}
or in the case where the index starts at $0$, 
\begin{equation}
    j = \sum\limits_{\substack{l=1 \\ l\neq k}}^N\left[i_l\prod\limits_{\substack{m=1 \\ m\neq k}}^{l-1}n_m\right].
\end{equation}
For instance, if $H = [H_{i_1i_2i_3}]\in \mathbb{C}^{2\times 2\times 2}$, then $H_{(1)}$ is the $2\times 4$ matrix given by
\begin{equation}
H_{(1)} = \left[\begin{array}{cccc}
    H_{111} & H_{121} & H_{112} & H_{122} \\
    H_{211} & H_{221} & H_{212} & H_{222}
\end{array}\right].
\end{equation}
It was shown in \cite{LMV} that if $H\in \mathbb{C}^{n_1\times n_2\times...\times n_N}$ has the higher-order singular value decomposition 
\begin{equation}
    H = (V_1,...,V_n)*\Sigma,
\end{equation}
then $H_{(n)}$ has the matrix singular value decomposition
\begin{equation}
    H_{(n)} = V_n\Sigma_{(k)}(V_{k+1}\otimes...\otimes V_N\otimes V_1\otimes...\otimes V_{k-1}),
\end{equation}
where $\Sigma_{(k)} = \mathrm{diag}(\sigma_1^{(k)},...,\sigma_{n_k}^{(k)})\in \mathbb{C}^{n_k\times (n_{k+1}...n_Nn_1...n_{k-1})}$. 

We also review the notion of the {\it $\pi$-transpose} \cite{LH} of a hypermatrix $H = [H_{i_1...i_N}]\in \mathbb{C}^{n_1\times n_2\times...\times n_N}$, which is defined as the hypermatrix
\begin{equation}
H^{\pi} := [H_{\pi(i_1)...\pi(i_N)}]\in \mathbb{C}^{n_{\pi(1)}\times n_{\pi(2)}\times...\times n_{\pi(N)}},
\end{equation}
where $\pi\in S_N$. We also note that if $n_1=n_2=...=n_N =: n$, then we say that $H$ is a \textit{cuboid hypermatrix of order N with length n}.  

Lastly, we review the Cayley's first hyperdeterminant, also known as the combinatorial hyperdeterminant. Suppose $H$ is a cuboid hypermatrix of order $N$ with side length $n$, i.e. $H\in \mathbb{C}^{\overbrace{n\times...\times n}^{N\text{ times}}}$. For a permutation $\sigma\in S_n$, let $l(\sigma)=l$ denote the smallest number of transpositions needed to form $\sigma$: $\sigma=s_{i_1}\ldots s_{i_l}$. Then the \textit{combinatorial hyperdeterminant} \cite{LH}, of $H$ is defined to be
\begin{equation}\mathrm{hdet}(H) := \frac{1}{n!}\sum\limits_{\sigma_1,\sigma_2,...,\sigma_N\in S_n}(-1)^{\sum\limits_{i=1}^nl(\sigma_j)}\prod\limits_{j=1}^NA_{\sigma_1(j)\sigma_2(j)...\sigma_N(j)}.
\end{equation}
Note that $\text{hdet}$ is identically $0$ for all hypermatrices of odd order, and for hypermatrices of even order it is equal to
\begin{equation}\sum\limits_{\sigma_2,...,\sigma_N\in S_n}(-1)^{\sum\limits_{i=2}^nl(\sigma_j)}\prod\limits_{j=1}^NA_{j\sigma_2(j)...\sigma_N(j)}
\end{equation}
(\cite{LH,AY}). The next result is well-known and referenced in this paper.
\begin{proposition}\label{p:prop1}\cite{LH,AY}
	For $A_1,A_2,...,A_N\in SL(n)$, 
 \begin{equation}
 \mathrm{hdet}((A_1,A_2,...,A_N)*H) = \mathrm{hdet}(H).
 \end{equation}
\end{proposition}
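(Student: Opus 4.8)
The plan is to prove the identity by a direct expansion of the definition of $\mathrm{hdet}$, which reduces the statement to a single elementary identity about signed sums over $S_n$. Set $H' := (A_1,\dots,A_N)*H$. Substituting the defining formula $H'_{i_1\cdots i_N} = \sum_{k_1,\dots,k_N}(A_1)_{i_1 k_1}\cdots (A_N)_{i_N k_N} H_{k_1\cdots k_N}$ into the definition of $\mathrm{hdet}(H')$ and expanding the product $\prod_{j=1}^{n} H'_{\sigma_1(j)\cdots\sigma_N(j)}$, each of the $n$ factors $H'_{\sigma_1(j)\cdots\sigma_N(j)}$ carries its own set of summation indices $k_1,\dots,k_N$. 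Collecting these into $N$ functions $f_1,\dots,f_N\colon\{1,\dots,n\}\to\{1,\dots,n\}$ (so that $f_i(j)$ is the $i$-th index used in the $j$-th factor) and interchanging the order of summation so that the $\sigma_i$-summations decouple once the $f_i$ are fixed, one obtains
\begin{equation*}
\mathrm{hdet}(H') = \frac{1}{n!}\sum_{f_1,\dots,f_N}\Bigl(\prod_{j=1}^{n} H_{f_1(j)\cdots f_N(j)}\Bigr)\prod_{i=1}^{N}\Bigl(\sum_{\sigma\in S_n}(-1)^{l(\sigma)}\prod_{j=1}^{n}(A_i)_{\sigma(j)\,f_i(j)}\Bigr).
\end{equation*}

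The key step is then the identity: for any $A\in\mathbb{C}^{n\times n}$ and any map $f\colon\{1,\dots,n\}\to\{1,\dots,n\}$,
\begin{equation*}
\sum_{\sigma\in S_n}(-1)^{l(\sigma)}\prod_{j=1}^{n}A_{\sigma(j)\,f(j)} = \begin{cases}(-1)^{l(f)}\det A & \text{if } f\in S_n,\\[2pt] 0 & \text{otherwise.}\end{cases}
\end{equation*}
When $f$ fails to be a bijection, choose $a\neq b$ with $f(a)=f(b)$; the map $\sigma\mapsto\sigma\circ(a\,b)$ is a fixed-point-free involution of $S_n$ that leaves each product unchanged (the two swapped factors have the same second index) but reverses the sign $(-1)^{l(\sigma)}$, so the sum vanishes. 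When $f$ is a permutation, reindexing the product by $j\mapsto f(j)$ and writing $\sigma=\rho\circ f$ gives $(-1)^{l(f)}\sum_{\rho\in S_n}(-1)^{l(\rho)}\prod_{j=1}^{n}A_{\rho(j)\,j} = (-1)^{l(f)}\det A$, using the column form $\det A = \sum_{\rho}(-1)^{l(\rho)}\prod_j A_{\rho(j)\,j}$ of the determinant (i.e.\ $\det A = \det A^{\mathsf T}$) together with multiplicativity of $(-1)^{l(\cdot)}$.

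Applying this identity in each of the $N$ inner sums collapses the outer sum over the functions $f_1,\dots,f_N$ to a sum over permutations $\tau_1,\dots,\tau_N\in S_n$ and produces
\begin{equation*}
\mathrm{hdet}(H') = \Bigl(\prod_{i=1}^{N}\det A_i\Bigr)\cdot\frac{1}{n!}\sum_{\tau_1,\dots,\tau_N\in S_n}(-1)^{\sum_{i=1}^{N}l(\tau_i)}\prod_{j=1}^{n}H_{\tau_1(j)\cdots\tau_N(j)} = \Bigl(\prod_{i=1}^{N}\det A_i\Bigr)\mathrm{hdet}(H).
\end{equation*}
Since $A_i\in SL(n)$ forces $\det A_i = 1$ for every $i$, this is exactly the claim; in fact the argument gives the slightly more general homogeneity $\mathrm{hdet}((A_1,\dots,A_N)*H) = \bigl(\prod_i\det A_i\bigr)\mathrm{hdet}(H)$ for arbitrary square $A_i$. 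The only genuine obstacle is bookkeeping: one must track the multi-index expansion and the interchange of summations carefully, and—crucially—state the auxiliary $S_n$-identity for all functions $f$ (not merely permutations) so that the non-injective contributions are visibly seen to cancel. Everything else is routine.
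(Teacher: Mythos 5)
Your proof is correct and complete. The paper itself offers no proof of this proposition---it is quoted as a known result from the cited references---so there is no in-paper argument to compare against; your direct expansion is the standard one. The two key steps both check out: the decoupling of the $\sigma_i$-sums once the index functions $f_i$ are fixed is a legitimate interchange/factorization of finite sums, and the auxiliary identity $\sum_{\sigma\in S_n}(-1)^{l(\sigma)}\prod_j A_{\sigma(j)f(j)} = (-1)^{l(f)}\det A$ for bijective $f$ (and $0$ otherwise, by the sign-reversing involution $\sigma\mapsto\sigma\circ(a\,b)$) is exactly what is needed to collapse the sum over arbitrary functions to a sum over permutations. As you note, the argument in fact yields the stronger homogeneity $\mathrm{hdet}((A_1,\dots,A_N)*H)=\bigl(\prod_i\det A_i\bigr)\mathrm{hdet}(H)$, of which the stated invariance under $SL(n)$ is the special case $\det A_i=1$.
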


\section{Correspondance Between Qubits and Hypermatrices}\label{S3}
Suppose we have two strings $a  = a_1a_2...a_n$ and $b = b_1b_2...b_n$. Recall that $a<b$ in the lexicographic order if $a_i<b_i$, where $i$ is the first position where the two strings differ. For example, in the lexicographic order, $000 < 001 < 010 < 011 < 100 < 101 < 110 < 111$.

Let $\psi$ be any pure $n$-qubit state 
${\psi}= \sum\limits_{i_1,...,i_n\in \{0,1\}}\psi_{i_1...i_n}\ket{i_1...i_n}\in (\mathbb{C}^2)^{\otimes n} \cong \mathbb{C}^{2^n}$, where
$\ket{i_1...i_n} = \ket{i_1}\otimes...\otimes\ket{i_n}$, and the amplitudes satisfy 
\begin{equation}\sum\limits_{i_1,...,i_n\in \{0,1\}}|\psi_{i_1...i_n}|^2=1.
\end{equation}
We can order the amplitudes of $\psi$ in the lexicographic order and define the $2^n$-dimensional vector
\begin{equation}
\ket{\psi} = (\psi_{0...00}, \psi_{0...01},\psi_{0...10},\psi_{0...11}, \ldots, \psi_{1...11})^t.\end{equation}
In the following, we will identify the pure state $\psi$ with the vector $\ket{\psi}$. We also consider
the following outer product 
\begin{equation}\widehat{\psi}=\sum\limits_{i_1,...,i_n\in \{0,1\}}\psi_{i_1...i_n}\ket{i_1}\circ...\circ\ket{i_n}
\end{equation}
a cuboid hypermatrix of length $2$ and order $n$, whose Frobenius norm is $1$. In other words, 
\begin{equation}\widehat{\psi} = [\psi_{i_1...i_n}]_{2\times ...\times 2}
\end{equation}
with the $((i_1+1),...,(i_n+1))$-entry of $\widehat{\psi}$ entry being $\psi_{i_1...i_n}$. Consequently, we have an isomorphism between the Hilbert spaces of pure $n$-qubits $\psi$ and their corresponding hypermatrices $\widehat{\psi}$.

Let $\psi$ and $\varphi$ be two pure $n$-qubit states with corresponding hypermatrices $\widehat{\psi}$ and $\widehat{\varphi}$. 
We say that two hypermatrices $\widehat{\psi}$ and $\widehat{\varphi}$ are {\it LU equivalent} if there exists $U_1,...,U_n\in SU(2)$ such that
\begin{equation}
\widehat{\varphi} = (U_1,...,U_n)*\widehat{\psi}.
\end{equation}

\begin{lemma}\label{l:LU} 
    The pure states
    $\psi$ and $\varphi$ are LU equivalent if and only if $\widehat{\psi}$ and $\widehat{\varphi}$ are LU equivalent. 
\end{lemma}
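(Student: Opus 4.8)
The plan is to unwind both notions of LU equivalence to the same system of coordinate equations, using the fact established in Section~\ref{S3} that the correspondence $\psi \leftrightarrow \widehat{\psi}$ is just a relabeling of the $2^n$ amplitudes $\psi_{i_1\ldots i_n}$ (the lexicographic ordering on one side, the $((i_1+1),\ldots,(i_n+1))$-entry on the other). Concretely, I would fix matrices $U_1,\ldots,U_n \in SU(2)$ and compute, on one hand, the coordinates of $(U_1\otimes\cdots\otimes U_n)\ket{\psi}$ in the standard tensor basis $\ket{j_1\ldots j_n}$, and on the other hand, the entries of $(U_1,\ldots,U_n)*\widehat{\psi}$ via the defining formula for multilinear multiplication. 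Both should come out to
\begin{equation}
\sum_{i_1,\ldots,i_n\in\{0,1\}} (U_1)_{j_1 i_1}(U_2)_{j_2 i_2}\cdots(U_n)_{j_n i_n}\,\psi_{i_1\ldots i_n},
\end{equation}
the first because the action of $U_1\otimes\cdots\otimes U_n$ on $\ket{i_1}\otimes\cdots\otimes\ket{i_n}$ is $(U_1\ket{i_1})\otimes\cdots\otimes(U_n\ket{i_n})$ and the matrix of a tensor product in the lexicographically ordered basis has precisely the entries $(U_1)_{j_1 i_1}\cdots(U_n)_{j_n i_n}$; the second by direct inspection of the multilinear multiplication formula with $N=n$, $n_k=2$, $A_k=U_k$.

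Given this common expression, the lemma becomes a tautology: $\widehat{\varphi}=(U_1,\ldots,U_n)*\widehat{\psi}$ says exactly that $\varphi_{j_1\ldots j_n}$ equals the displayed sum for every index tuple, and $\ket{\varphi}=(U_1\otimes\cdots\otimes U_n)\ket{\psi}$ says the same thing after translating the entry-indexing back through the amplitude labels. So I would phrase the argument as: the hat map intertwines the representation $U_1\otimes\cdots\otimes U_n$ on $(\mathbb{C}^2)^{\otimes n}$ with the operation $H\mapsto (U_1,\ldots,U_n)*H$ on $\mathbb{C}^{2\times\cdots\times 2}$, i.e. $\widehat{(U_1\otimes\cdots\otimes U_n)\psi}=(U_1,\ldots,U_n)*\widehat{\psi}$, and then both directions of the iff are immediate by applying the hat map (respectively its inverse) to the defining equation, since the hat map is a bijection.

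I anticipate no genuine obstacle here; the only thing requiring care is bookkeeping of indices --- making sure the lexicographic position of $(j_1,\ldots,j_n)$ in $\ket{\varphi}$ matches the $((j_1+1),\ldots,(j_n+1))$-entry of $\widehat{\varphi}$, and that the summation ranges in the tensor-product expansion line up with those in the multilinear multiplication formula. This is the same correspondence already used implicitly in Section~\ref{S3} when identifying $\psi$ with both $\ket{\psi}$ and $\widehat{\psi}$, so I would either cite that identification directly or spell out the one-line index check. One should also note that $SU(2)\subset SL(2)$ guarantees all the matrices involved are of the type handled by the multilinear multiplication formalism, though unitarity itself plays no role in the proof --- the statement would hold verbatim for arbitrary invertible (indeed arbitrary) local operators, which is worth remarking since it is exactly what lets one later replace $SU(2)$ by $SL(2)$ in the hyperdeterminant arguments.
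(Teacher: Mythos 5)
Your proposal is correct and follows essentially the same route as the paper: both establish that the hat map intertwines $U_1\otimes\cdots\otimes U_n$ with $(U_1,\ldots,U_n)*(\cdot)$ (the paper via the outer-product identity for multilinear multiplication, you via the equivalent coordinate computation), and then read off the equivalence from the bijectivity of the correspondence. Your explicit remark that the reverse direction follows from bijectivity of the hat map is slightly more careful than the paper's treatment, which asserts the ``if and only if'' after exhibiting only the forward computation.
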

\begin{proof} 
    Suppose $\psi$ and $\varphi$ are LU equivalent, then there exists $U_1,...,U_n\in SU(2)$ such that 
    \[\ket{\varphi} = (U_1\otimes...\otimes U_n) \ket{\psi}.\]
    Observe 
    \begin{align*}
        (U_1\otimes...\otimes U_n) \ket{\psi} &= (U_1\otimes...\otimes U_n)\sum\limits_{i_1,...,i_n\in \{0,1\}}\psi_{i_1...i_n}\ket{i_1}\otimes...\otimes \ket{i_n} \\
        &= \sum\limits_{i_1,...,i_n\in \{0,1\}}\psi_{i_1...i_n}(U_1\ket{i_1})\otimes...\otimes (U_n\ket{i_n})
    \end{align*}
    where the last equality follows from the linearity of Kronecker products. The isomorphism constructed above maps this vector to the hypermatrix
    \begin{equation}
    \widehat{\varphi} = \sum\limits_{i_1,...,i_n\in \{0,1\}}\psi_{i_1...i_n}(U_1\ket{i_1})\circ...\circ (U_n\ket{i_n})
    \end{equation}
    and by the linearity of multilinear matrix multiplication, this is equal to
    \begin{equation}(U_1,...,U_n)*\left(\sum\limits_{i_1,...,i_n\in \{0,1\}}\psi_{i_1...i_n}\ket{i_1}\circ...\circ \ket{i_n}\right)
    \end{equation}
    which is precisely
    \begin{equation}
    \widehat{\varphi}=(U_1,...,U_n)*\widehat{\psi}.
    \end{equation}
Thus, $\psi$ and $\varphi$ are LU equivalent if and only if $\widehat{\psi}$ and $\widehat{\varphi}$ are LU equivalent. 
\end{proof}
We note that our construction and this result straightforwardly extends to $n$-qudits, however, for this paper we only focus on $n$-qubits.

Consequently, if $\psi$ and $\varphi$ are LU equivalent, then 
\[\widehat{\varphi} = (U_1,...,U_n)*\widehat{\psi}\]
for some $U_1,...,U_n\in SU(2)$, and so if $\widehat{\psi}$ has the higher-order singular value decomposition
\begin{equation}
    \widehat{\psi} = (V_1,...,V_n)*\Sigma,
\end{equation}
for some $V_1,...,V_n\in U(2)$, then 
\begin{equation}
    \widehat{\varphi} = (U_1,...,U_n)*\big((V_1,...,V_n)*\Sigma\big) = (U_1V_1,...,U_nV_n)*\Sigma
\end{equation}
is the higher-order singular value decomposition for $\widehat{\varphi}$, showing that they share the same core tensor. Hence, they have the same $k$-mode singular values (by uniqueness). On the other hand, in \cite{Qiao1, Qiao2}, Liu et. al. proved that if two states $\psi$ and $\varphi$ have the same core tensor, then they are LU equivalent. We thus have the following theorem.
\begin{theorem}
    For any $\pi\in S_n$, $\widehat{\psi}$ and $\widehat{\psi}^{\pi}$ are LU equivalent. 
\end{theorem}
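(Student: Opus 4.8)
The plan is to combine the higher-order singular value decomposition with the local-unitary classification recalled just above the statement. Fix $\pi\in S_n$ and choose a higher-order singular value decomposition $\widehat\psi=(V_1,\dots,V_n)*\Sigma$, where each $V_k\in U(2)$ and $\Sigma$ is all-orthogonal and ordered.

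First I would record a compatibility identity between multilinear multiplication and the $\pi$-transpose: for all matrices $A_1,\dots,A_n$ and every cuboid hypermatrix $H$ of order $n$,
\[
\bigl((A_1,\dots,A_n)*H\bigr)^{\pi}=\bigl(A_{\pi^{-1}(1)},\dots,A_{\pi^{-1}(n)}\bigr)*H^{\pi}.
\]
This follows by unwinding the definitions of multilinear multiplication and of the $\pi$-transpose and relabelling the summation indices $j_m\mapsto l_{\pi(m)}$. Applying it to the decomposition of $\widehat\psi$ gives
\[
\widehat\psi^{\pi}=\bigl(V_{\pi^{-1}(1)},\dots,V_{\pi^{-1}(n)}\bigr)*\Sigma^{\pi}.
\]

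Next I would check that $\Sigma^{\pi}$ is again all-orthogonal and ordered. Fixing the $k$th index of $\Sigma^{\pi}$ to a value $\alpha$ reproduces, after a harmless relabelling of the remaining $n-1$ indices, precisely the subtensor of $\Sigma$ obtained by fixing its $\pi^{-1}(k)$th index to $\alpha$; since the Frobenius inner product and the Frobenius norm are insensitive to such a relabelling, the all-orthogonality relations and the monotonicity of the $k$-mode singular values pass from $\Sigma$ to $\Sigma^{\pi}$. Hence the displayed factorization of $\widehat\psi^{\pi}$ is itself a higher-order singular value decomposition, with core tensor $\Sigma^{\pi}$; in particular the $k$-mode singular values of $\widehat\psi^{\pi}$ are the $\pi^{-1}(k)$-mode singular values of $\widehat\psi$, so over all modes the two states carry the same multiset of mode singular values.

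The final step is to invoke the theorem of Liu et al.\ to promote a common core tensor to an LU equivalence, and this is the step I expect to be the main obstacle. The construction above produces $\Sigma^{\pi}$, rather than $\Sigma$, as a core tensor of $\widehat\psi^{\pi}$, so the argument is not complete until one shows that $\widehat\psi$ and $\widehat\psi^{\pi}$ can be assigned a genuinely common core tensor, equivalently that $\Sigma$ and $\Sigma^{\pi}$ are themselves LU equivalent within the ambiguity of the higher-order singular value decomposition. Since $S_n$ is generated by adjacent transpositions, $\widehat\psi^{\sigma\tau}=(\widehat\psi^{\tau})^{\sigma}$, and LU equivalence is transitive, one may try to reduce to $\pi=(k,k+1)$, where $\Sigma^{\pi}$ differs from $\Sigma$ only by interchanging the roles of modes $k$ and $k+1$; the crux is then to absorb this interchange using how the all-orthogonality conditions constrain a length-$2$ core tensor. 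Once $\widehat\psi$ and $\widehat\psi^{\pi}$ are known to share a core tensor, Liu et al.\ concludes the proof, and the passage from $U(2)$ to $SU(2)$ is routine by adjusting global phases.
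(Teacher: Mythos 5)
Your computation of how the $\pi$-transpose interacts with the higher-order singular value decomposition is correct, and the obstacle you flag at the end is not a technicality you failed to dispatch --- it is a genuine gap that cannot be closed, because the statement itself fails. As you observe, the $k$-mode singular values of $\widehat\psi^{\pi}$ are the $\pi^{-1}(k)$-mode singular values of $\widehat\psi$; since the $k$-mode singular values for each \emph{fixed} $k$ are LU invariants, an LU equivalence between $\widehat\psi$ and $\widehat\psi^{\pi}$ would force $\sigma_j^{(k)}(\widehat\psi)=\sigma_j^{(\pi^{-1}(k))}(\widehat\psi)$ for every $k$ and $j$. This fails already for $n=3$: take $\psi=\tfrac{1}{\sqrt2}(\ket{000}+\ket{011})$ and $\pi=(12)$, so that $\psi^{\pi}=\tfrac{1}{\sqrt2}(\ket{000}+\ket{101})$; the first state has $1$-mode singular values $(1,0)$ while the second has $(\tfrac{1}{\sqrt2},\tfrac{1}{\sqrt2})$ (equivalently, qubit $1$ has a pure reduced state in $\psi$ but a maximally mixed one in $\psi^{\pi}$). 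So $\Sigma$ and $\Sigma^{\pi}$ are genuinely different core tensors, and neither the reduction to adjacent transpositions nor any choice of local unitaries can repair this.

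For comparison, the paper's proof takes a different and incorrect shortcut: it asserts that the $\pi$-transpose merely ``permutes the rows in the sum,'' concluding $\widehat\psi^{\pi}=(P_1V_1,\dots,P_nV_n)*\Sigma$ for $2\times2$ permutation matrices $P_k$, i.e., that $\widehat\psi$ and $\widehat\psi^{\pi}$ share the core tensor $\Sigma$. But that identity describes a permutation of the index \emph{values} within each mode (a simultaneous bit flip), not a permutation of the \emph{modes}, which is what the $\pi$-transpose does both by the definition in Section \ref{S2} and in the paper's own worked example. Your version of the computation, $\widehat\psi^{\pi}=(V_{\pi^{-1}(1)},\dots,V_{\pi^{-1}(n)})*\Sigma^{\pi}$, is the correct one, and it is precisely what exposes the counterexample above. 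What your argument actually proves is that the $\pi$-transpose preserves the \emph{multiset} of mode singular values taken over all modes; it does not preserve LU equivalence.
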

\begin{proof}
    Suppose $\widehat{\psi}$ has the higher-order singular value decomposition
    \[(V_1,...,V_n)*\Sigma\]
    for some $V_1,...,V_n\in U(2)$. In particular, this implies that the $((i_1+1),...,(i_n+1))$-entry of $\widehat{\psi}$ is given by the sum
    \[\sum\limits_{j_1,...,j_n=1}^2(V_1)_{(i_1+1)j_1}...(V_n)_{(i_n+1)j_n}\Sigma_{j_1...j_n}\]
    and so for any $\pi\in S_n$, the $((i_1+1),...,(i_n+1))$-entry of $\psi^{\pi}$ is given by
    \begin{equation}
        \psi_{\pi(i_1)...\pi(i_n)} = \sum\limits_{j_1,...,j_n=1}^2(V_1)_{\pi(i_1+1)j_1}...(V_n)_{\pi(i_n+1)j_n}\Sigma_{j_1...j_n}.
    \end{equation}  
    Since $i_k\in \{0,1\}$ for each $k\in [n]$, the above sum is well-defined; moreover, since we are just permuting the rows in the sum, it follows that 
    \begin{equation}
        \widehat{\psi}^{\pi} = (P_1V_1,...,P_nV_n)*\Sigma,
    \end{equation}
    where $P_1,..., P_n$ are some $2\times 2$ permutation matrices (which recall are orthogonal, hence unitary). Thus, $\widehat{\psi}$ and $\widehat{\psi}^{\pi}$ have the same core tensor in their higher-order singular value decomposition, proving that they are LU equivalent.
\end{proof}
\begin{eg}
    Consider the $3$-qubit $\ket{\psi} = \frac{1}{2}\ket{000}-\frac{1}{2}\ket{100}+\frac{1}{\sqrt{2}}\ket{101} = \left[\begin{array}{cccccccc}
        \frac{1}{2} & 0 & 0 & 0 & -\frac{1}{2} & \frac{1}{\sqrt{2}} & 0 & 0 
    \end{array}\right]^t$. The corresponding $2\times 2\times 2$ hypermatrix $\widehat{\psi}$ is given by
    \[\widehat{\psi} = \left[\begin{array}{cc|cc}
        \frac{1}{2} & 0 & 0 & 0 \\
        -\frac{1}{2} & 0 & \frac{1}{\sqrt{2}} & 0
    \end{array}\right]\]
    which in matrix form can be represented as
    \[\widehat{\psi}_{(1)} = \left[\begin{array}{cccc}
        \frac{1}{2} & 0 & 0 & 0 \\
        -\frac{1}{2} & 0 & \frac{1}{\sqrt{2}} & 0
    \end{array}\right].\]
    For $\pi_1=(13)\in S_3$, the corresponding $2\times 2\times 2$ hypermatrix is given by
    \[\widehat{\psi}^{\pi_1} = \left[\begin{array}{cc|cc}
        \frac{1}{2} & 0 & -\frac{1}{2} & 0 \\
        0 & 0 & \frac{1}{\sqrt{2}} & 0
    \end{array}\right]\]
    which in matrix form can be represented as
    \[\widehat{\psi}^{\pi_1}_{(1)} = \left[\begin{array}{cccc}
        \frac{1}{2} & 0 & -\frac{1}{2} & 0 \\
        0 & 0 & \frac{1}{\sqrt{2}} & 0
    \end{array}\right].\]
    Similarly, for $\pi_2 = (132)$, the corresponding $2\times 2\times 2$ hypermatrix is given by
    \[\widehat{\psi}^{\pi_2} = \left[\begin{array}{cc|cc}
        \frac{1}{2} & -\frac{1}{2} & 0 & 0 \\
        0 & \frac{1}{\sqrt{2}} & 0 & 0
    \end{array}\right]\]
    which in matrix form can be represented as
    \[\widehat{\psi}^{\pi_2}_{(1)} = \left[\begin{array}{cccc}
        \frac{1}{2} & -\frac{1}{2} & 0 & 0 \\
        0 & \frac{1}{\sqrt{2}} & 0 & 0
    \end{array}\right].\]
    The core tensor for each of these states is the same, which in matrix form is 
    \[\Sigma_{(1)} = \left[\begin{array}{cccc}
        \frac{\sqrt{2-\sqrt{2}}}{2} & 0 & 0 & 0 \\
        0 & \frac{\sqrt{2+\sqrt{2}}}{2} & 0 & 0
    \end{array}\right].\]
    Consequently, $\widehat{\psi}$, $\widehat{\psi}^{\pi_1}$, and $\widehat{\psi}^{\pi_2}$ are LU equivalent. Switching back to quantum states, this is the same as saying that the following $3$-qubits are LU equivalent
    \[\ket{\psi} = \frac{1}{2}\ket{000}-\frac{1}{2}\ket{100}+\frac{1}{\sqrt{2}}\ket{101}\]
    \[\ket{\psi}^{\pi_1} = \frac{1}{2}\ket{000}-\frac{1}{2}\ket{001}+\frac{1}{\sqrt{2}}\ket{101}\]
    \[\ket{\psi}^{\pi_2} = \frac{1}{2}\ket{000}-\frac{1}{2}\ket{010}+\frac{1}{\sqrt{2}}\ket{110}.\]
    
    On the other hand, consider the quantum state $\ket{\varphi} = \frac{1}{2}\ket{000}-\frac{1}{2}\ket{010}+\frac{1}{\sqrt{2}}\ket{101} = \left[\begin{array}{cccccccc}
        \frac{1}{2} & -\frac{1}{2} & 0 & 0 & 0 & \frac{1}{\sqrt{2}} & 0 & 0
    \end{array}\right]^t$. The corresponding $2\times 2\times 2$ hypermatrix $\widehat{\varphi}$ is given by
    \[\widehat{\varphi} = \left[\begin{array}{cc|cc}
        \frac{1}{2} & -\frac{1}{2} & 0 & 0 \\
        0 & 0 & \frac{1}{\sqrt{2}} & 0
    \end{array}\right].\]
    That is, $\widehat{\varphi}$ is obtained from $\widehat{\psi}$ by switching the $(211)$-coordinate of $\widehat{\psi}$ with its $(121)$-coordinate and vice versa, and leaving everything else fixed. Hence, there is no permuation relating $\widehat{\psi}$ with $\widehat{\varphi}$. Indeed, the matrix form of $\widehat{\varphi}$ can be represented as
    \[\widehat{\varphi}_{(1)} = \left[\begin{array}{cccc}
        \frac{1}{2} & -\frac{1}{2} & 0 & 0 \\
        0 & 0 & \frac{1}{\sqrt{2}} & 0
    \end{array}\right]\]
    and so it follows that the matrix form of its core tensor is
    \[\Sigma'_{(1)} = \left[\begin{array}{cccc}
        \frac{1}{\sqrt{2}} & 0 & 0 & 0 \\
        0 & \frac{1}{\sqrt{2}} & 0 & 0
    \end{array}\right].\]
    From this we see that $\widehat{\psi}$ and $\widehat{\varphi}$ have different $1$-mode singular values, proving that $\widehat{\psi}$ and $\widehat{\varphi}$ (and hence $\psi$ and $\varphi$) are not LU equivalent. 
\end{eg}
It just so happened in our example that the two states $\psi$ and $\varphi$, which were not related by a permutation, were not LU equivalent. We ask the following question: for any two quantum states that are not related by a permutation, are they necessarily not LU equivalent? If this is indeed true, then this would allow us to fully characterize entangled states in terms of the $\pi$-transpose, and additionally, it would make it very easy and quick to determine whether or not two states are LU equivalent. 

\section{Hyperdeterminants and $n$-tangles}
Recall that the {\it $n$-tangle}, a proposed measure of entanglement for pure $2n$-qubit states 
proposed in \cite{CW}, is defined as 
\[\tau_n(\ket{\psi}) = \left|\braket{\psi}{\widetilde{\psi}}\right|^2\]
where 
\[\ket{\widetilde{\psi}} = \sigma_y^{\otimes 2n},\]
with $\sigma_y$ being the second Pauli matrix 
$\left[\begin{array}{cc}
    0 & -i \\
    i & 0
\end{array}\right]$. The product $\sigma_y^{\otimes 2n}$ is sometimes known as the spin-flip transformation on $2n$-qubits. 
We now consider the relation between the $n$-tangle and the combinatorial hyperdeterminant via the hypermatrix of pure $2n$-qubit states. 

Let $\psi$ be a $2n$-qubit and $\widehat{\psi}$ be its corresponding hypermatrix as described in Section 
\ref{S3}. We now introduce an important matrix $\widehat{\mathrm{Ent}}_n$ as follows. Recall that the hyperdeterminant of $\widehat{\psi}$ is given by
\begin{equation}\mathrm{hdet}(\widehat{\psi}) = \sum\limits_{\sigma_2,...,\sigma_{2n}\in S_2}(-1)^m\psi_{0\sigma_2(0)...\sigma_{2n}(0)}\psi_{1\sigma_2(1)...\sigma_{2n}(1)}
\end{equation}
where $m$ denotes the number of permutations $\sigma_i\in S_2$ which are transpositions, and we will simply refer to the hyperdeterminant by $\mathrm{Ent}(\psi)$. Note in particular that since each $\sigma_i$ is in $S_2$, they are either the identity permutation (which takes $0$ to $0$ and $1$ to $1$) or they are the transposition which takes $0$ to $1$ and $1$ to $0$. Note also that $\mathrm{hdet}(\widehat{\psi})$ gives a quadratic form in the coefficients of $\psi$. 

Also, recall that for an arbitrary quadratic form 
\begin{equation}
    q(x_1,...,x_n) = \sum\limits_{i,j=1}^nq_{ij}x_ix_j,
\end{equation}
the \textit{matrix of the quadratic form} $q$ is the matrix $Q = [q_{ij}]\in \mathbb{C}^{n\times n}$. Denoting the vector $\left[\begin{array}{ccc}
    x_1 & ... & x_n
\end{array}\right]^t$ as $x$, we have that 
\begin{equation}
    x^tQx = q(x_1,...,x_n).
\end{equation}
We will denote the matrix of the quadratic form given by $\mathrm{Ent}(\psi)$ as $\widehat{\mathrm{Ent}}_n$. 
\begin{eg}
	For $n=1$, 
	\begin{align*}
	\mathrm{hdet}(\widehat{\psi}) &= \sum\limits_{\sigma_2\in S_2}(-1)^m\psi_{0\sigma_2(0)}\psi_{1\sigma_2(1)} \\
	&= \psi_{00}\psi_{11} - \psi_{01}\psi_{10}. 
	\end{align*}
	since $\sigma_2$ is either the identity or the only transposition in $S_2$. The matrix of this quadratic form is 
	\[\widehat{\mathrm{Ent}}_1 = \frac{1}{2}\left[\begin{array}{cccc}
	0 & 0 & 0 & 1 \\
	0 & 0 & -1 & 0 \\
	0 & -1 & 0 & 0 \\
	1 & 0 & 0 & 0
	\end{array}\right]\]
	For $n=2$, 
	\begin{align*}
	\mathrm{hdet}(\widehat{\psi}) &= \sum\limits_{\sigma_2,\sigma_3,\sigma_4\in S_2}(-1)^m\widehat{\psi}_{0\sigma_2(0)\sigma_3(0)\sigma_4(0)}\widehat{\psi}_{1\sigma_2(1)\sigma_3(1)\sigma_4(1)} \\
	&= \widehat{\psi}_{0000}\widehat{\psi}_{1111} - \widehat{\psi}_{0001}\widehat{\psi}_{1110} - \widehat{\psi}_{0010}\widehat{\psi}_{1101} + \widehat{\psi}_{0011}\widehat{\psi}_{1100} - \widehat{\psi}_{0100}\widehat{\psi}_{1011}\\
 &\qquad +\widehat{\psi}_{0101}\widehat{\psi}_{1010} + \widehat{\psi}_{0110}\widehat{\psi}_{1001} - \widehat{\psi}_{0111}\widehat{\psi}_{1000}. 
	\end{align*}
	The matrix of this quadratic form is given by
	\[\widehat{\mathrm{Ent}}_2 = \frac{1}{2}\left[\begin{array}{cccccccccccccccc}
	e_{16} & -e_{15} & -e_{14} & e_{13} & -e_{12} & e_{11} & e_{10} & -e_9 & -e_8 & e_7 & e_6 & -e_5 & e_4 & -e_3 & -e_2 & e_1 
	\end{array}\right]\]
	where $e_i$ is the $i^{th}$ vector in the standard ordered basis for $\mathbb{C}^{16}$. 
\end{eg}

From the above examples, we notice a few patterns. In general, each term in $\mathrm{hdet}(\widehat{\psi})$ is of the form $\pm \psi_{i...i_{2n}}\psi_{\overline{i_1}...\overline{i_{2n}}}$
where $\overline{i_j} = 1-i_j$. Equivalently, each term is of the form $\pm \ket{\psi}_j\ket{\psi}_{4^n-j+1}$ for $1\leq j\leq 4^n$. So after factoring out $\frac{1}{2}$ (which for the rest of this section we will assume we have already done), it follows that in general $\widehat{\mathrm{Ent}}_n$ is an anti-diagonal matrix with $1$'s and $-1$'s on its main anti-diagonal. 

Going from left to right, we represent each entry of the main anti-diagonal of $\widehat{\mathrm{Ent}}_n$ as a $+$ or $-$, with $1$ being identified as a $+$ and $-1$ being identified as a $-$. We then have that the main anti-diagonal of $\widehat{\mathrm{Ent}}_1$ is given by the string
\[+--+\]
In particular, the first entry gives the sign of the term $\psi_{00}\psi_{11}$, the $2^{nd}$ entry gives the sign of the term $\psi_{01}\psi_{10}$, the $3^{rd}$ entry gives the sign of the term $\psi_{01}\psi_{10}$, and the $4^{th}$ entry gives the sign of the term $\psi_{00}\psi_{11}$. Similarly, the main anti-diagonal of $\widehat{\mathrm{Ent}}_2$ is given by the string
\[+--+-++--++-+--+.\]
The first entry gives the sign of the term $\psi_{0000}\psi_{1111}$, the $2^{nd}$ entry gives the sign of the term $\psi_{0001}\psi_{1110}$,..., the $8^{th}$ entry gives the sign of the term $\psi_{0111}\psi_{1000}$, the $9^{th}$ entry gives the sign of the term $\psi_{0111}\psi_{1000}$,..., and the $16^{th}$ entry gives the sign of the term $\psi_{0000}\psi_{1111}$. By the hyperdeterminant formula, the sign of \[\ket{\psi}_j\ket{\psi}_{4^n-j+1} = \psi_{i_1...i_{2n}}\psi_{\overline{i_1}...\overline{i_{2n}}}\] 
is positive if there are an even number of $0$'s and $1$'s in either factor; likewise, the sign of 
\[\ket{\psi}_j\ket{\psi}_{4^n-j+1} = \psi_{i_1...i_{2n}}\psi_{\overline{i_1}...\overline{i_{2n}}}\] 
is negative if there is an odd number of $0$'s and $1$'s in either factor. Thus, $+$ corresponds to a coefficient of $\ket{\psi}$ with an even number of $0$'s and $1'$s, and $-$ corresponds to a coefficient of $\ket{\psi}$ with an odd number of $0$'s and $1$'s. 

Identify the coefficient $\psi_{i_1...i_{2n}}$ with the binary string $i_1...i_{2n}$, and let $B = b_1b_2...b_{4^n}$ denote the sequence consisting of all binary strings of length $2n$ ordered via the lexicographic order. We call a binary string $b_i$ "even" if it has an even number of $0$'s and $1$'s, and we call it "odd" if it has an odd number of $0$'s and $1$'s. Let $\chi$ be a function given by 
\begin{equation}\chi(b_i) = 
\begin{cases}
\begin{rcases}
+, & \text{ if }b_i\text{ is even} \\
-, & \text{ if }b_i\text{ is odd}
\end{rcases}.
\end{cases}
\end{equation}
Lastly, set 
\begin{equation}P := +--+
\end{equation}
and 
\begin{equation}N := -++-.
\end{equation}
\begin{fact}
	The binary string with a $1$ in only its $k^{th}$ position occurs in the $(2^{k-1}+1)^{th}$ position of $B$. 
\end{fact}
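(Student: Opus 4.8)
The plan is to reduce the claim to the elementary identification of lexicographically ordered length-$2n$ binary strings with the integers $0,1,\dots,4^n-1$.

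First I would set up that correspondence precisely. Reading a length-$2n$ string $i_1i_2\cdots i_{2n}$ as a binary numeral with $i_1$ the most significant digit gives a bijection $\mathrm{val}(i_1\cdots i_{2n}):=\sum_{j=1}^{2n}i_j\,2^{2n-j}$ onto $\{0,1,\dots,4^n-1\}$. The key point is that this bijection is order preserving: if $a\neq b$ are length-$2n$ strings and $p$ is the first coordinate (from the left) at which they differ, then by the geometric-series bound $2^{2n-p}>2^{2n-p-1}+\cdots+2^{0}$ the digit in position $p$ alone fixes the sign of $\mathrm{val}(a)-\mathrm{val}(b)$, and this is exactly the digit that settles the lexicographic comparison of $a$ and $b$. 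Hence $a<_{\mathrm{lex}}b\iff\mathrm{val}(a)<\mathrm{val}(b)$, so the lexicographic enumeration $B=b_1b_2\cdots b_{4^n}$ lists $0,1,\dots,4^n-1$ in increasing order; that is, $\mathrm{val}(b_t)=t-1$ for every $t$, equivalently $b_t$ is the $2n$-bit binary expansion of $t-1$.

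With this in hand the Fact is essentially a substitution. The string $s^{(k)}$ having a single $1$, placed so that it contributes $2^{k-1}$ to $\mathrm{val}$, satisfies $\mathrm{val}(s^{(k)})=2^{k-1}$, and therefore $s^{(k)}=b_t$ precisely when $t-1=2^{k-1}$, i.e.\ $t=2^{k-1}+1$, which is the assertion. (Here "$k$-th position" is read so that that digit carries place value $2^{k-1}$, i.e.\ counted from the least significant end; with the opposite convention the index comes out as $2^{2n-k}+1$ by the same one-line computation, and one should simply make the convention explicit.)

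I do not expect any genuine obstacle: the only step carrying any content is the equivalence between lexicographic and numerical order, and that is immediate from the dominance inequality $2^{m}>2^{m-1}+\cdots+2^{0}$. Everything afterward is a direct evaluation of $\mathrm{val}$. As a sanity check, for $n=1$ one has $B=(00,01,10,11)$, and the single-$1$ strings $01=b_2$ and $10=b_3$ match $2^{0}+1$ and $2^{1}+1$, respectively.
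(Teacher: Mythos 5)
Your proof is correct. The paper states this Fact without any proof, and your argument---that lexicographic order on fixed-length binary strings coincides with numerical order under the usual binary-value map, so the string of value $2^{k-1}$ sits in position $2^{k-1}+1$---is exactly the elementary justification the authors evidently had in mind. Your caveat about the indexing convention is well taken and resolves in your favor: the paper's later usage (e.g.\ ``$3 = b_5 = 0\ldots0100$'') confirms that ``$k$th position'' is counted from the least significant end, so the statement as written is the one your computation yields.
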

For $1\leq k\leq 2n$, call a binary string with only a $1$ in the $k^{th}$ position $k$. From Fact 1, in our notation, we have that 
\[k = b_{2^{k-1}+1}.\] 
So in particular, $3$ occurs after a sequence of $P$, $4$ occurs after a sequence of $PN$, $5$ occurs after a sequence of $PNNP$, $6$ occurs after a sequence of $PNNPNPPN$, and so on. Indeed, in general, we have the following result.
\begin{lemma}
	For $k\geq 3$, the binary string $k$ occurs after a sequence of $P$'s and $N$'s, which we denote as $S$. Moreover, the $k+1$ string occurs after the sequence $S\overline{S}$, where $\overline{S}$ is obtained after switching all $P$'s in $S$ to $N$, and likewise flipping all $N$'s in $S$ to $P$. 
\end{lemma}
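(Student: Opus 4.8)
The plan is to recognize the sign string $\chi(b_1)\chi(b_2)\cdots\chi(b_{4^n})$ as the $\pm$-encoding of an initial segment of the Thue--Morse sequence and to exploit its self-similarity. Identify $\chi$ with $\pm 1$, so that $\chi(b_i)=(-1)^{w(b_i)}$, where $w(\cdot)$ denotes the number of $1$'s in a binary string; since $b_i$ is the length-$2n$ binary expansion of $i-1$ (lexicographic order reads the first symbol as most significant), this equals $(-1)^{w(i-1)}$ with $w(i-1)$ the ordinary binary digit sum of $i-1$. By Fact~1 the string $k$ sits in position $2^{k-1}+1$ of $B$, so it is immediately preceded by the sign block $S:=\chi(b_1)\chi(b_2)\cdots\chi(b_{2^{k-1}})$, and likewise $k+1$ is preceded by $\chi(b_1)\cdots\chi(b_{2^k})$. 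Thus it suffices to prove two things: (i) for $k\geq 3$ the block $S$ is a concatenation of copies of $P=+--+$ and $N=-++-$; and (ii) $\chi(b_1)\cdots\chi(b_{2^k})=S\overline{S}$.

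For (i), write $2^{k-1}=4M$ (legitimate since $k\geq 3$) and split $S$ into the $M$ consecutive length-four blocks indexed by $j=0,\dots,M-1$, the $j$th being $\chi(b_{4j+1})\chi(b_{4j+2})\chi(b_{4j+3})\chi(b_{4j+4})$. Because $4j$ ends in two $0$'s we have $w(4j+r)=w(4j)+w(r)$ for $r\in\{0,1,2,3\}$, so this block equals $(-1)^{w(4j)}$ times the componentwise pattern $(+,-,-,+)$. Hence it is $P$ when $w(4j)$ is even and $N$ when $w(4j)$ is odd, which gives (i). Note in particular that $\overline{P}=N$ and $\overline{N}=P$, so a componentwise $+\leftrightarrow-$ reversal of any such length-four block coincides with the $P\leftrightarrow N$ swap used in the statement of $\overline{S}$.

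For (ii), the first $2^{k-1}$ signs of $\chi(b_1)\cdots\chi(b_{2^k})$ are $S$ by definition, so only the last $2^{k-1}$ remain. For $0\leq j<2^{k-1}$ the $(2^{k-1}+j+1)$th entry of $B$ is the binary expansion of $2^{k-1}+j$; since $j<2^{k-1}$ the digit of value $2^{k-1}$ is $0$ in $j$, so adding $2^{k-1}$ merely turns it on and $w(2^{k-1}+j)=w(j)+1$, giving $\chi(b_{2^{k-1}+j+1})=-\,\chi(b_{j+1})$. Hence the last $2^{k-1}$ signs are the componentwise negation of $S$; combined with (i), this negation is exactly $\overline{S}$ as defined, which proves (ii) and the lemma. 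As a sanity check, $k=3$ gives $S=P$ and $\chi(b_1)\cdots\chi(b_8)=PN$, and $k=4$ gives $S=PN$ and $\chi(b_1)\cdots\chi(b_{16})=PNNP$, matching the displayed patterns preceding the strings $4$ and $5$.

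The whole argument is bookkeeping with binary digit sums, so I do not anticipate a genuine obstacle. The only point that needs care is reconciling the two notions of ``flipping'': the componentwise $+\leftrightarrow-$ reversal that drops out of $w(2^{k-1}+j)=w(j)+1$, versus the blockwise $P\leftrightarrow N$ swap built into the definition of $\overline{S}$. This is precisely why step (i)---establishing that $S$ really does decompose into $P$'s and $N$'s---must be carried out first, after which the two descriptions of $\overline{S}$ agree.
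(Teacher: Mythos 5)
Your proof is correct and rests on the same key observation as the paper's: turning on the bit of value $2^{k-1}$ adds one to the digit sum, so the signs in positions $2^{k-1}+1$ through $2^{k}$ are the componentwise negation of those in positions $1$ through $2^{k-1}$, giving $S\overline{S}$. The only real difference is that you establish the $P$/$N$ block structure of $S$ directly via $w(4j+r)=w(4j)+w(r)$, whereas the paper obtains it by induction from the base case $S=P$; your explicit reconciliation of componentwise negation with the blockwise $P\leftrightarrow N$ swap is a point the paper passes over more quickly, but the route is essentially identical.
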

\begin{proof}
First, note that $3 = b_5 = 0...0100$ occurs after a sequence of just $P$. This is because $b_1 = 0...0000$, $b_2 = 0...0001$, $b_3 = 0...0010$, $b_4 = 0...0011$, and so 
\begin{equation}\chi(b_1) = +,\qquad \chi(b_2) = -,\qquad \chi(b_3) = -,\qquad \chi(b_4) = +,
\end{equation}
which is precisely $P = +--+$. 
	
	Now, the string $b_{2^{k-1}+1+i}$ is obtained from the string $b_{1+i}$ after flipping the $k^{th}$ bit to a $1$, for $0\leq i\leq 2^{k-1}-1$. Therefore, if $\chi(b_{1+i}) = +$, then $\chi(b_{2^{k-1}+1+i}) = -$, and similarly if $\chi(b_{1+i}) = -$, then $\chi(b_{2^{k-1}+1+i}) = +$. Consequently, if it takes a sequence of $S$ (consisting of some ordering of $+$'s and $-$'s, which we assume nothing about) to get from $b_1$ up to but not including $k = b_{2^{k-1}+1}$, then it takes a sequence of $\overline{S}$ to get from $k = b_{2^{k-1}+1}$ up to but not including $k+1 = b_{2^k+1+i}$. That is, $k+1$ occurs after a string of $S\overline{S}$. From this and the fact that to get to $3$ it takes a sequence of $P$, it follows that $S$ is a sequence of $P$'s and $N$'s. 
\end{proof}
To recap, $\widehat{\mathrm{Ent}}_n$ is the matrix of the hyperdeterminant of the $2n$-qubit $\ket{\psi}$, whose coefficients $\psi_{i_1...i_{2n}}$ we have identified with the binary string $i_1...i_{2n}$, and each such string we have assigned a $+$ or $-$ to based on its parity. $\widehat{\mathrm{Ent}}_n$ is an anti-diagonal matrix whose main anti-diagonal can be represented as a sequence of $+$'s and $-$'s. 

Recall that the main anti-diagonal of $\widehat{\mathrm{Ent}}_1$ is given by $P$. After a sequence of $P$, we end up at the string $3 = 0...000100$. Therefore, by the lemma, after a sequence of $P\overline{P} = PN$, we end up at the string $4 = 0...001000$, and consequently after a sequence of $PN\overline{PN} = PNNP$, we end up at the string $5 = 0...010000$. Hence, the main anti-diagonal of $\widehat{\mathrm{Ent}}_2$ is given by 
\begin{equation}(P\overline{P})(\overline{P\overline{P}}) = PNNP
\end{equation}
Applying the same reasoning, it follows that the main anti-diagonal of $\widehat{\mathrm{Ent}}_3$ is given by 
\begin{equation}(PNNP\overline{PNNP})(\overline{PNNP\overline{PNNP}}) = PNNPNPPNNPPNPNNP.
\end{equation}
Indeed, continuing with this reasoning, in general, we have the following result. 
\begin{proposition}\label{p:prop2}
	The main anti-diagonal of 
 $\widehat{\mathrm{Ent}}_n$ is given by a sequence of $P$'s and $N$'s. Moreover, denoting its main anti-diagonal as $S$, we have that the main anti-diagonal of $\widehat{\mathrm{Ent}}_{n+1}$ is given by 
\begin{equation}(S\overline{S})(\overline{S\overline{S}}) = S\overline{S}\overline{S}S
\end{equation}
\end{proposition}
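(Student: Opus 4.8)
The plan is to prove the statement by induction on $n$, using the preceding Lemma as the engine. The first thing to pin down is the precise correspondence between the main anti-diagonal of $\widehat{\mathrm{Ent}}_n$ and an initial segment of the sequence $B$ of lexicographically ordered binary strings of length $2n$. By the discussion preceding Fact 1, the $j$-th entry (read left to right) of the main anti-diagonal of $\widehat{\mathrm{Ent}}_n$ is the sign of the term $\ket{\psi}_j\ket{\psi}_{4^n-j+1}=\psi_{i_1\cdots i_{2n}}\psi_{\overline{i_1}\cdots\overline{i_{2n}}}$ with $i_1\cdots i_{2n}=b_j$, and that sign is exactly $\chi(b_j)$. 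Hence the main anti-diagonal of $\widehat{\mathrm{Ent}}_n$, viewed as a word in $\{+,-\}$, is $\chi(b_1)\chi(b_2)\cdots\chi(b_{4^n})$, i.e. the $\chi$-image of the initial segment of $B$ of length $4^n=2^{2n}$. By Fact 1 the weight-one string $2n+1$ occupies position $2^{2n}+1$ of $B$, so this initial segment is precisely the portion of $B$ occurring strictly before the string $2n+1$. This reformulation is what makes the Lemma directly applicable.

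For the base case $n=1$, the portion of $B$ before the string $3$ is $b_1b_2b_3b_4$, and $\chi(b_1)\chi(b_2)\chi(b_3)\chi(b_4)=+\,-\,-\,+\,=P$ (exactly the computation carried out in the proof of the Lemma), which is a (one-letter) sequence of $P$'s and $N$'s and agrees with the earlier computation of $\widehat{\mathrm{Ent}}_1$.

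For the inductive step, suppose the main anti-diagonal of $\widehat{\mathrm{Ent}}_n$ is a sequence $S$ of $P$'s and $N$'s, so that, by the first paragraph, $S$ is the portion of $B$ before the string $2n+1$. Applying the Lemma with $k=2n+1$ (legitimate since $2n+1\geq 3$), the string $2n+2$ occurs in $B$ after the sequence $S\overline{S}$. Since $\overline{P}=N$ and $\overline{N}=P$, the word $\overline{S}$, and hence $S\overline{S}$, is again a sequence of $P$'s and $N$'s, so the Lemma applies once more with $k=2n+2$: the string $2n+3$ occurs after $(S\overline{S})\,\overline{(S\overline{S})}$. Using that the bar operation is an involution and distributes over concatenation, $\overline{S\overline{S}}=\overline{S}\,S$, so the string $2n+3$ occurs in $B$ after $S\,\overline{S}\,\overline{S}\,S$. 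By Fact 1 the string $2n+3$ sits in position $2^{2(n+1)}+1=4^{n+1}+1$ of $B$, so the portion of $B$ before it has length exactly $4^{n+1}$; by the first paragraph applied with $n$ replaced by $n+1$, that portion is the main anti-diagonal of $\widehat{\mathrm{Ent}}_{n+1}$. Therefore that anti-diagonal equals $S\,\overline{S}\,\overline{S}\,S=(S\overline{S})(\overline{S\overline{S}})$, and it is visibly again a sequence of $P$'s and $N$'s, closing the induction.

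I expect the only real friction to be the bookkeeping in the first paragraph: verifying that the main anti-diagonal of $\widehat{\mathrm{Ent}}_n$ terminates exactly at position $4^n$ of $B$ (equivalently, immediately before the weight-one string $2n+1$), so that one clean pair of applications of the Lemma advances us from the $\widehat{\mathrm{Ent}}_n$ block to precisely the $\widehat{\mathrm{Ent}}_{n+1}$ block. The identities $\overline{\overline{S}}=S$, $\overline{XY}=\overline{X}\,\overline{Y}$, and $\overline{P}=N$ are routine and need only be recorded. As an alternative to the double use of the Lemma, one could argue directly: the lexicographically ordered length-$(2n+2)$ strings split into four consecutive blocks of $4^n$ strings according to their first two bits $00,01,10,11$, and prepending $00$ or $11$ preserves parity while prepending $01$ or $10$ reverses it, so the four blocks have $\chi$-images $S,\overline{S},\overline{S},S$; but the inductive argument above stays closer to the narrative already developed in the text.
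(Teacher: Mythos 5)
Your proof is correct and follows essentially the same route the paper takes: the paper justifies Proposition \ref{p:prop2} only by the informal remark ``continuing with this reasoning,'' and your argument is precisely a careful formalization of that reasoning --- identifying the anti-diagonal of $\widehat{\mathrm{Ent}}_n$ with the $\chi$-image of the first $4^n$ strings of $B$ (ending just before the weight-one string $2n+1$, via Fact 1) and applying the Lemma twice per inductive step to advance to the block ending before $2n+3$. The bookkeeping you flag, and the identities $\overline{\overline{S}}=S$ and $\overline{XY}=\overline{X}\,\overline{Y}$, are handled correctly, so nothing further is needed.
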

Since the second quarter of the main anti-diagonal of $\widehat{\mathrm{Ent}}$ is the negation of the first quarter, and since the second half of the main anti-diagonal of the negation of the first half, we have the following consequence. 
\begin{corollary}
	(After factoring out $\frac{1}{2}$) $\widehat{\mathrm{Ent}}_n$ is a symmetric anti-diagonal matrix whose main anti-diagonal consists of $1$'s and $-1$'s, and this holds for all positive integers $n$. 
\end{corollary}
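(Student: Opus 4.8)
\section*{Proof proposal for the Corollary}

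The plan is to prove the Corollary by induction on $n$, using Proposition \ref{p:prop2} for the inductive step, after recasting ``symmetric'' into a statement about the anti-diagonal word. First I would recall the structural facts already established in the run-up to Proposition \ref{p:prop2}: after factoring out $\tfrac12$, $\widehat{\mathrm{Ent}}_n$ is anti-diagonal with every anti-diagonal entry equal to $\pm1$, since each monomial of $\mathrm{hdet}(\widehat{\psi})$ has the form $\pm\,\ket{\psi}_j\ket{\psi}_{4^n-j+1}$ and the $2^{2n-1}$ monomials exhaust, exactly once each, the $2^{2n-1}$ unordered pairs $\{j,\,4^n+1-j\}$. So the only new content is symmetry. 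For an $N\times N$ matrix $M$ supported on its anti-diagonal, writing $s_j$ for the $(j,\,N+1-j)$ entry, one has $M^t=M$ if and only if $s_j=s_{N+1-j}$ for all $j$, i.e.\ if and only if the word $s_1s_2\cdots s_N$ is a palindrome. Hence it suffices to show that the main anti-diagonal of $\widehat{\mathrm{Ent}}_n$, read as a word in $\{+,-\}$, is a palindrome for every $n\ge 1$.

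For $n=1$ this word is $P=+--+$, which equals its own reversal. For the inductive step, suppose the anti-diagonal word $S$ of $\widehat{\mathrm{Ent}}_n$ is a palindrome; by Proposition \ref{p:prop2} the anti-diagonal word of $\widehat{\mathrm{Ent}}_{n+1}$ is $S\overline{S}\,\overline{S}\,S$. Reversing a concatenation reverses the block order and reverses each block, so its reversal is $\mathrm{rev}(S)\,\mathrm{rev}(\overline{S})\,\mathrm{rev}(\overline{S})\,\mathrm{rev}(S)$. Now reversal commutes with the bar operation, because flipping $+\leftrightarrow-$ is applied positionwise; thus $\mathrm{rev}(\overline{S})=\overline{\mathrm{rev}(S)}$, and by the inductive hypothesis $\mathrm{rev}(S)=S$. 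Substituting gives $\mathrm{rev}(S\overline{S}\,\overline{S}\,S)=S\overline{S}\,\overline{S}\,S$, which closes the induction. Combining the palindrome property with the recalled facts that $\widehat{\mathrm{Ent}}_n$ is anti-diagonal with entries $\pm1$ (after factoring out $\tfrac12$) yields the Corollary for all positive integers $n$.

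I do not anticipate a real obstacle; the two points that need care are (i) phrasing the equivalence ``symmetric anti-diagonal matrix $\iff$ palindromic anti-diagonal word'' with the correct indexing, and (ii) observing that the recursion $S\mapsto S\overline{S}\,\overline{S}\,S$ of Proposition \ref{p:prop2} preserves palindromes only because $S$ itself is palindromic --- one cannot simply argue that the anti-diagonal is ``a concatenation of the palindromes $P$ and $N$'', since a concatenation of palindromes need not be a palindrome (for instance $PN\ne NP$).
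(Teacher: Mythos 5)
Your proposal is correct and follows essentially the same route as the paper, which derives the corollary directly from the recursion $S\mapsto S\overline{S}\,\overline{S}S$ of Proposition \ref{p:prop2} together with the previously established fact that $\widehat{\mathrm{Ent}}_n$ is anti-diagonal with entries $\pm1$ after factoring out $\tfrac12$. Your write-up is in fact more careful than the paper's one-sentence justification, since you make explicit that symmetry is equivalent to the anti-diagonal word being a palindrome and that the recursion preserves palindromicity only because the base word $P$ is itself a palindrome.
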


Now we would like to study the relationship between 
the matrix of the hyperdeterminant of an arbitrary $2n$-qubit state with the spin-flip transformation.

To start with, we consider the structure of  $\sigma_y^{\otimes 2n}$. First note that 
\[\sigma_y^{\otimes 2} = \left[\begin{array}{cccc}
0 & 0 & 0 & -1 \\
0 & 0 & 1 & 0 \\
0 & 1 & 0 & 0 \\
-1 & 0 & 0 & 0
\end{array}\right],\]
which (like $\widehat{\mathrm{Ent}}$) is a symmetric anti-diagonal matrix consisting of $1$'s and $-1$'s. Going from left to right and representing each entry of the main anti-diagonal of $\sigma_y^{\otimes 2}$ as a $+$ or $-$, with $1$ being identified as $+$ and $-1$ being identified as $-$, we have that the main anti-diagonal of $\sigma_y^{\otimes 2}$ is given by 
\[-++-,\] 
which in our previous notation is just $N$. 
\begin{fact}
	Let $M$ be any arbitrary $n\times n$ anti-diagonal matrix with main anti-diagonal given by 
	\[(m_1,...,m_n) =: (m).\] Then $\sigma_2^{\otimes 2}\otimes M$ is given by the $4n\times 4n$ anti-diagonal matrix with main anti-diagonal given by 
\begin{equation}(-m_1,...,-m_n,m_1,...,m_n,m_1...,m_n,-m_1,...,-m_n) = (-m,m,m,-m).
\end{equation}
\end{fact}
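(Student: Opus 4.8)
The plan is to compute $\sigma_y^{\otimes 2}\otimes M$ directly from the definition of the Kronecker product, exploiting that both factors are anti-diagonal. Write $\sigma_y^{\otimes 2} = [s_{pq}]_{1\le p,q\le 4}$; since it is anti-diagonal with main anti-diagonal $N=(-1,1,1,-1)$, its only nonzero entries are $s_{p,\,5-p}=N_p$ for $p=1,2,3,4$. By definition, $\sigma_y^{\otimes 2}\otimes M$ is the $4\times 4$ array of $n\times n$ blocks whose $(p,q)$ block equals $s_{pq}M$, so the only nonzero blocks sit in block-positions $(p,5-p)$ and equal $N_pM$. This reduces the statement to locating, in global coordinates, the nonzero entries of each block $N_pM$.

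Next I would feed in the anti-diagonality of $M$: writing $M_{i,\,n+1-i}=m_i$ with all other entries zero, the block $N_pM$ has exactly one nonzero entry per row, namely $N_p m_i$ in its local position $(i,n+1-i)$. Since the $(p,5-p)$ block occupies global rows $(p-1)n+1,\dots,pn$ and global columns $(4-p)n+1,\dots,(5-p)n$, this entry sits at global row $(p-1)n+i$ and global column $(5-p)n+1-i$. The one identity that does all the work is $(5-p)n+1-i = 4n+1-\big((p-1)n+i\big)$, a quick algebraic check; it shows every nonzero entry of $\sigma_y^{\otimes 2}\otimes M$ lies on the main anti-diagonal of the $4n\times 4n$ matrix, so the product is genuinely anti-diagonal, not merely block-anti-diagonal.

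Finally I would read off the anti-diagonal in order. As $p$ runs $1$ to $4$ and, for fixed $p$, $i$ runs $1$ to $n$, the row index $(p-1)n+i$ sweeps through $1,2,\dots,4n$ in order, with associated anti-diagonal entry $N_p m_i$. Hence the main anti-diagonal is the concatenation $(N_1m_1,\dots,N_1m_n,\,N_2m_1,\dots,N_2m_n,\,N_3m_1,\dots,N_3m_n,\,N_4m_1,\dots,N_4m_n)$, which is $(-m,m,m,-m)$ because $(N_1,N_2,N_3,N_4)=(-1,1,1,-1)$. There is no genuine obstacle; the only care needed is the index bookkeeping — keeping the block index $p$ and the within-block index $i$ separate and verifying the displayed identity once — and I would record the sanity check that $n=1$, $M=[1]$ recovers the stated matrix for $\sigma_y^{\otimes 2}$ with anti-diagonal $N$.
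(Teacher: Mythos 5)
Your proof is correct and complete: the reduction to the nonzero blocks $N_pM$ at block-positions $(p,5-p)$, the identity $(5-p)n+1-i = 4n+1-((p-1)n+i)$ placing every nonzero entry on the global main anti-diagonal, and the readout $(N_1m,N_2m,N_3m,N_4m)=(-m,m,m,-m)$ are all accurate. The paper states this Fact without any proof, so there is nothing to compare against; your direct blockwise Kronecker-product computation is exactly the verification the paper implicitly relies on.
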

From Fact 2 it follows that $\sigma_y^{\otimes 2n}$ is an anti-diagonal matrix. Furthermore, if the main anti-diagonal of $\sigma_y^{\otimes 2n}$ is denoted as $S$, then again by Fact 2 taking the Kronecker product of $\sigma_y^{\otimes 2}$ with $\sigma_y^{\otimes 2n}$ is equivalent to negating $S$, concatenating with $S$ twice, and then concatenating once more with the negation of $S$. Moreover, since the main anti-diagonal of $\sigma_y^{\otimes 2}$ is $P$, from this it follows that the main anti-diagonal of $\sigma_y^{\otimes 2n}$ is a sequence of $P$'s and $N$'s. In summary, we have the following proposition. 
\begin{proposition}\label{p:prop3}
	The main anti-diagonal of $\sigma_y^{\otimes 2n}$ is given by a sequence of $P$'s and $N$'s. Moreover, denoting its main anti-diagonal as $S$, we have that the main anti-diagonal of $\sigma_y^{\otimes 2(n+1)}$ is given by
\begin{equation}\overline{S}SS\overline{S}.
\end{equation}
\end{proposition}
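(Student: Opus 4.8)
The plan is to prove both assertions at once by induction on $n$, with Fact 2 doing all the work, in exact parallel to the derivation of Proposition \ref{p:prop2}. For the base case $n=1$, I would simply note that the displayed matrix $\sigma_y^{\otimes 2}$ is anti-diagonal with main anti-diagonal (read left to right) equal to $-++-$, i.e.\ the single block $N$; this is trivially a sequence of $P$'s and $N$'s, so the first claim holds at $n=1$.

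For the inductive step, suppose the main anti-diagonal of $\sigma_y^{\otimes 2n}$ is a sequence $S$ of $P$'s and $N$'s (in particular $\sigma_y^{\otimes 2n}$ is anti-diagonal). Writing $\sigma_y^{\otimes 2(n+1)} = \sigma_y^{\otimes 2}\otimes \sigma_y^{\otimes 2n}$ and applying Fact 2 with $M = \sigma_y^{\otimes 2n}$, we get that $\sigma_y^{\otimes 2(n+1)}$ is anti-diagonal with main anti-diagonal $(-S, S, S, -S) = \overline{S}\,S\,S\,\overline{S}$, which is precisely the recursive formula in the statement. It then remains to see that $\overline{S}SS\overline{S}$ is again a sequence of $P$'s and $N$'s. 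Since $\overline{P} = -++- = N$ and $\overline{N} = +--+ = P$, complementing a sequence of $P$'s and $N$'s amounts to swapping each length-four block $P\leftrightarrow N$, so $\overline{S}$ is again such a sequence; concatenating the four sequences $\overline{S}, S, S, \overline{S}$ preserves the property, closing the induction.

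The only point that needs a moment's care is the orientation of the Kronecker product: Fact 2 is stated as $\sigma_2^{\otimes 2}\otimes M$ with the $\sigma_y^{\otimes 2}$ factor on the left, and since $\sigma_y^{\otimes m}$ means $\sigma_y\otimes\cdots\otimes\sigma_y$ unambiguously, the identity $\sigma_y^{\otimes 2(n+1)} = \sigma_y^{\otimes 2}\otimes \sigma_y^{\otimes 2n}$ is literally true and Fact 2 applies verbatim, so no genuine obstacle arises. Equivalently, one could phrase the argument without induction: Fact 2 says the operation $\sigma_y^{\otimes 2}\otimes(-)$ sends an anti-diagonal matrix with main anti-diagonal $m$ to one with main anti-diagonal $(-m,m,m,-m)$, and starting from $N$ and iterating this $n-1$ times stays inside the set of sequences of $P$'s and $N$'s because the operation $S\mapsto \overline{S}SS\overline{S}$ respects both the block-of-four structure and the involution $P\leftrightarrow N$. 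Either way, all the content is already in Fact 2; the proposition is just its clean iterate.
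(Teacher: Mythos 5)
Your argument is correct and is essentially the paper's own: the paper derives Proposition \ref{p:prop3} in the paragraph preceding its statement by exactly this iteration of Fact 2, starting from $\sigma_y^{\otimes 2}$ whose main anti-diagonal is $N$, and observing that the operation $S\mapsto \overline{S}SS\overline{S}$ preserves being a sequence of $P$'s and $N$'s. (The only difference is that you make the induction explicit and catch a slip in the paper's text, which at one point calls the anti-diagonal of $\sigma_y^{\otimes 2}$ a $P$ after having correctly identified it as $N$.)
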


We finally have everything we need to establish the equation relating the hyperdeterminant of $2n$-qubits with the Pauli matrix $\sigma_2$. 
\begin{theorem}
	Let $\widehat{\mathrm{Ent}}_n$ denote the matrix of the combinatorial hyperdeterminant of an arbitrary $2n$-qubit state $\ket{\psi}$. Then 
\begin{equation}\label{e:entequ}
\widehat{\mathrm{Ent}}_n = \frac{(-1)^n}{2}\sigma_y^{\otimes 2n}.
\end{equation}
\end{theorem}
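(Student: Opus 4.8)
The plan is to prove \eqref{e:entequ} by induction on $n$, after first reducing the matrix identity to an identity between the sign strings that encode the two matrices' anti-diagonals. By the Corollary above, once we factor out $\tfrac12$ the matrix $\widehat{\mathrm{Ent}}_n$ is a symmetric anti-diagonal $4^n\times 4^n$ matrix all of whose anti-diagonal entries are $\pm1$, and by Fact~2 together with Proposition~\ref{p:prop3} the same holds for $\sigma_y^{\otimes 2n}$. Since all entries off the main anti-diagonal vanish on both sides, it will suffice to compare the two main anti-diagonals, i.e.\ their associated strings of $+$'s and $-$'s. The key point I would establish is that the sign string of $\widehat{\mathrm{Ent}}_n$ is obtained from that of $\sigma_y^{\otimes 2n}$ by applying the bar operation $\,\overline{\phantom{S}}\,$ (which flips every $+\leftrightarrow-$, equivalently every $P\leftrightarrow N$) precisely when $n$ is odd; this says exactly that the sign matrix of $\widehat{\mathrm{Ent}}_n$ is $(-1)^n\sigma_y^{\otimes 2n}$, and reinstating the factor $\tfrac12$ gives \eqref{e:entequ}.

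For the base case $n=1$ I would note that the anti-diagonal of $\widehat{\mathrm{Ent}}_1$ is $P = +--+$ while that of $\sigma_y^{\otimes 2}$ is $N = -++- = \overline{P}$, so the two strings differ by one application of the bar operation, in agreement with $(-1)^1=-1$. For the inductive step, assume the claim for $n$ and let $S$ be the sign string of $\widehat{\mathrm{Ent}}_n$ and $S'$ that of $\sigma_y^{\otimes 2n}$, so $S'=S$ when $n$ is even and $S'=\overline S$ when $n$ is odd. Writing $f(T):=T\,\overline T\,\overline T\,T$, Proposition~\ref{p:prop2} gives that the sign string of $\widehat{\mathrm{Ent}}_{n+1}$ is $f(S)$, and Proposition~\ref{p:prop3} gives that the sign string of $\sigma_y^{\otimes 2(n+1)}$ is $\overline{S'}\,S'\,S'\,\overline{S'}$. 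I would then invoke the two elementary identities $f(\overline T)=\overline{f(T)}$ and $\overline{T'}\,T'\,T'\,\overline{T'}=\overline{f(T')}$: when $n$ is even this makes the $\sigma_y^{\otimes 2(n+1)}$ string equal to $\overline{f(S)}$, the bar of the $\widehat{\mathrm{Ent}}_{n+1}$ string (matching $(-1)^{n+1}=-1$), and when $n$ is odd it makes it $\overline{f(\overline S)}=\overline{\overline{f(S)}}=f(S)$, equal to the $\widehat{\mathrm{Ent}}_{n+1}$ string (matching $(-1)^{n+1}=1$). In either case the claim holds at $n+1$, which completes the induction.

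I do not expect a genuine obstacle here: once Propositions~\ref{p:prop2} and~\ref{p:prop3} and the Corollary are available, there is essentially no computation left. The one thing that must be handled carefully is the bookkeeping of negations — specifically, recognizing that the two doubling recursions $S\mapsto S\,\overline S\,\overline S\,S$ (for $\widehat{\mathrm{Ent}}$) and $S\mapsto\overline S\,S\,S\,\overline S$ (for $\sigma_y^{\otimes\bullet}$) are negatives of one another on the same input, which is exactly the mechanism that makes the parity sign $(-1)^n$ propagate correctly through the induction. It is also worth double-checking, when writing this up, that the two matrices genuinely have the same size $4^n\times 4^n$ at each stage and that the index conventions in Propositions~\ref{p:prop2} and~\ref{p:prop3} line up, so that ``comparing the anti-diagonals'' really does determine the matrices.
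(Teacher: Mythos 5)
Your proposal is correct and follows essentially the same route as the paper: an induction on $n$ reducing the matrix identity to a comparison of the anti-diagonal sign strings, with the base case $P$ versus $N=\overline{P}$ and the inductive step driven by the two doubling recursions of Propositions~\ref{p:prop2} and~\ref{p:prop3} split into the cases $n$ even and $n$ odd. Your packaging of the step via $f(T)=T\overline{T}\overline{T}T$ and the identities $f(\overline{T})=\overline{f(T)}$, $\overline{T}TT\overline{T}=\overline{f(T)}$ is just a cleaner way of writing the same case analysis the paper carries out.
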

\begin{proof}
	First, note that from Proposition \ref{p:prop2} and Proposition \ref{p:prop3}, we know that both $\widehat{\mathrm{Ent}}$ and $\sigma_y^{\otimes 2n}$ are anti-diagonal matrices whose main anti-diagonals are sequences of $P$'s and $N$'s. We proceed with induction. 
 For $n=1$, by direct computation, we have that the main anti-diagonal of $\widehat{\mathrm{Ent}}_1$ (after factoring out $\frac{1}{2}$) is $P$, and we also have that the main anti-diagonal of $\sigma_y^{\otimes 2}$ is $N$. Thus, 
 \begin{equation}
     \widehat{\mathrm{Ent}}_1 = -\frac{1}{2}\sigma_y^{\otimes 2}.
\end{equation}
Assume that the equation holds for some positive integer $n$. Now we consider the case of $n+1$. Denote the main anti-diagonal of $\widehat{\mathrm{Ent}}_n$ (after factoring out $\frac{1}{2}$) as $S$, and denote the main anti-diagonal of $\sigma_y^{\otimes 2n}$ as $T$. Then by the induction hypothesis, we have one of the following 2 cases:
\begin{enumerate}
	\item When $n$ is even, in which case by assumption we have that $S=T$. Then by Proposition \ref{p:prop2}, we have that the main anti-diagonal of $\widehat{\mathrm{Ent}}_{n+1}$ (after factoring out $\frac{1}{2}$) is given by 
	\[S\overline{S}\overline{S}S,\]
	and by Proposition \ref{p:prop3} we have that the main anti-diagonal of
    $\sigma_y^{2(n+1)}$ is given by
    \[
    \overline{T}TT\overline{T} = \overline{S}SS\overline{S} = \overline{S\overline{S}\overline{S}S}.
    \]
    Therefore,    
    \[
    \widehat{\mathrm{Ent}}_{n+1} = -\frac{1}{2}\sigma_y^{\otimes 2(n+1)}.
    \]
	\item When $n$ is odd, in which case by assumption $S = \overline{T}$. Then by Proposition \ref{p:prop2} we have that the main anti-diagonal of $\widehat{\mathrm{Ent}}_{n+1}$ (after factoring out $\frac{1}{2}$) is given by    
    \[
    S\overline{S}\overline{S}S,
    \]
	and by Proposition \ref{p:prop3} we have that the main anti-diagonal of $\sigma_y^{2(n+1)}$ is given by \[
    \overline{T}TT\overline{T} = S\overline{S}\overline{S}S.
    \]
Therefore, 
\[\widehat{\mathrm{Ent}}_{n+1} = \frac{1}{2}\sigma_y^{2(n+1)}.\]
\end{enumerate}
Combining the two cases we have that in general
\begin{equation}
\widehat{\mathrm{Ent}}_{n+1} = \frac{(-1)^{n+1}}{2}\sigma_y^{\otimes 2(n+1)}
\end{equation}
for any positive integer $n$. Thus the theorem is proved by induction.
\end{proof}

An almost immediate consequence is that the hyperdeterminant itself may be viewed as a measure of entanglement and an LU-invariant. 
\begin{corollary} We have that
\begin{equation}\tau_n(\ket{\psi}) = 4|\mathrm{hdet}(\widehat{\psi})|^2.
\end{equation}
\end{corollary}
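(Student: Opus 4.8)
The plan is to evaluate the quadratic form $\mathrm{Ent}(\psi) = \mathrm{hdet}(\widehat{\psi})$ explicitly against its matrix and then recognize the spin-flip overlap. First I would use the defining property of the matrix of a quadratic form, namely $\mathrm{hdet}(\widehat{\psi}) = \mathrm{Ent}(\psi) = \ket{\psi}^{t}\,\widehat{\mathrm{Ent}}_n\,\ket{\psi}$, and substitute the identity \eqref{e:entequ} just proved, to get
\[
\mathrm{hdet}(\widehat{\psi}) \;=\; \frac{(-1)^{n}}{2}\,\ket{\psi}^{t}\,\sigma_y^{\otimes 2n}\,\ket{\psi},
\qquad\text{so}\qquad
|\mathrm{hdet}(\widehat{\psi})|^{2} \;=\; \frac14\,\bigl|\,\ket{\psi}^{t}\sigma_y^{\otimes 2n}\ket{\psi}\,\bigr|^{2}.
\]

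Next I would identify $\bigl|\ket{\psi}^{t}\sigma_y^{\otimes 2n}\ket{\psi}\bigr|$ with $\sqrt{\tau_n(\ket{\psi})}$. The key point is that $\sigma_y^{\otimes 2n}$ is a \emph{real} matrix: writing $\sigma_y = iK$ with $K = \bigl[\begin{smallmatrix}0 & -1\\ 1 & 0\end{smallmatrix}\bigr]$ real, one has $\sigma_y^{\otimes 2n} = i^{2n}K^{\otimes 2n} = (-1)^{n}K^{\otimes 2n}$, consistent with the anti-diagonal $\pm 1$ description established in Proposition~\ref{p:prop3}. Since $\ket{\widetilde{\psi}} = \sigma_y^{\otimes 2n}\ket{\psi^{*}}$ (the conjugation of amplitudes being understood in the definition of the $n$-tangle), realness of the entries gives
\[
\braket{\psi}{\widetilde{\psi}} \;=\; \sum_{i,j}\overline{\psi_i}\,\bigl(\sigma_y^{\otimes 2n}\bigr)_{ij}\,\overline{\psi_j} \;=\; \overline{\,\ket{\psi}^{t}\sigma_y^{\otimes 2n}\ket{\psi}\,},
\]
hence $\tau_n(\ket{\psi}) = \bigl|\braket{\psi}{\widetilde{\psi}}\bigr|^{2} = \bigl|\ket{\psi}^{t}\sigma_y^{\otimes 2n}\ket{\psi}\bigr|^{2}$.

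Combining the two displays gives $\tau_n(\ket{\psi}) = 4\,|\mathrm{hdet}(\widehat{\psi})|^{2}$, which is the assertion. There is no genuinely hard step: the entire content is \eqref{e:entequ} together with the observation that conjugating a real-matrix bilinear form only conjugates the scalar output and thus leaves its modulus unchanged. The one place to be careful is precisely this conjugation buried in the definition of $\ket{\widetilde{\psi}}$ — without noting that $\sigma_y^{\otimes 2n}$ has real entries one cannot pass directly from $|\ket{\psi}^{t}\sigma_y^{\otimes 2n}\ket{\psi}|$ to $|\braket{\psi}{\widetilde\psi}|$.
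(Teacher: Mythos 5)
Your proof is correct and follows essentially the same route as the paper's: both rest on the identity $\widehat{\mathrm{Ent}}_n = \frac{(-1)^n}{2}\sigma_y^{\otimes 2n}$ together with the fact that $\mathrm{hdet}(\widehat{\psi})$ is the quadratic form $\ket{\psi}^{t}\widehat{\mathrm{Ent}}_n\ket{\psi}$. The only (cosmetic) difference is how the conjugation hidden in $\ket{\widetilde{\psi}}=\sigma_y^{\otimes 2n}\ket{\psi^{*}}$ is handled — you use the realness of $\sigma_y^{\otimes 2n}$ to conjugate the whole bilinear form, while the paper equivalently invokes $\mathrm{hdet}(H^{*})=\mathrm{hdet}(H)^{*}$; both leave the modulus unchanged, so the arguments coincide.
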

\begin{proof}
    This is a straightforward calculation:
    \begin{align*}
        \tau_n(\ket{\psi}) &= \left|\braket{\psi}{\widetilde{\psi}}\right|^2 \\
        &= |\bra{\psi}{\sigma_y^{\otimes 2n}}\ket{\psi^*}|^2 \\
        &= 4|\bra{\psi}{\widehat{\mathrm{Ent}}_n}\ket{\psi^*}|^2,\quad \text{by Theorem 1} \\
        &= 4|\mathrm{hdet}(\widehat{\psi^*})|^2 \\
        &= 4|\mathrm{hdet}(\widehat{\psi})^*|^2,\quad \text{because in general }\mathrm{hdet}(H^*) = \mathrm{hdet}(H)^*\text{ for any cuboid hypermatrix }H \\
        &= 4|\mathrm{hdet}(\widehat{\psi})|^2.
    \end{align*}    
\end{proof}
A similar formula for the $n$-tangle involving determinants of the coefficients of $\psi$ was proven in \cite{Sharma-Sharma}, however by linking the $n$-tangle to the hyperdeterminant we can apply the theory of multilinear algebra to the $n$-tangle and more broadly the study of entanglement. For instance, it is known that the $n$-tangle is an LU-invariant, in fact, more generally a SLOCC invariant \cite{Li-Li}, and indeed this fact immediately follows from the above corollary since the hyperdeterminant is invariant under multilinear multiplication of matrices in the special linear group (Proposition \ref{p:prop1}).

\bigskip
\bigskip
\noindent\centerline{\bf Acknowledgments}

N. Jing is partially supported by Simons Foundation under the grant MP-TSM-00002518 during the work.

\bigskip
\bigskip
\noindent\centerline{\bf Data availability statement}

Any data that support the findings of this study are included within the article.

\bigskip
\bigskip


\begin{thebibliography}{00}

\bibitem{CN} I. L. Chuang and M. A. Nielsen, Quantum computation and quantum information, Cambridge University Press, Cambridge, England, 2000.
\bibitem{BZ} I. Bengtsson and K. Życzkowski, Geometry of quantum states: an introduction to quantum entanglement, Cambridge University Press, Cambridge, England, 2017.
\bibitem{YM} Y. Makhlin. Nonlocal properties of two-qubit gates and mixed states, and the optimization of quantum computations. Quant. Info. Process. 1(4) (2002), 243-252.
\bibitem{MW} A. Miyake and M. Wadati, Multipartite Entanglement and Hyperdeterminants, Quant. Info. Comp. 2 (Special) (2000), 540-555.
\bibitem{Concurrence} W. Wootters, Entanglement of formation of an arbitrary state of two Qubits, Phys. Rev. Lett. 80 (1998), 2245-2248.
\bibitem{EGW} J. Eisert, D. Gross, and M. Walter, Multipartite entanglement, Chap. 14,
in: Quantum Information: From Foundations to Quantum Technology Applications, eds. D. Bru\ss, G. Leuchs, Wiley‐VCH Verlag, 2016. 
\bibitem{ZVSW} J. Zhang, J. Vala, S. Sastry, and K. B. Whaley. Geometric theory of nonlocal two-qubit operations. Phys. Rev. A, 73(2) 022319, (2002).
\bibitem{SLOCC} N. Jing, M. Li, X. L. Jost, T. G. Zhang, and S. M. Fei, SLOCC invariants for multipartite mixed states, J. Phys. A: Math. Theor. 47 (2014), 215303.
\bibitem{CJ} J. Chang and N. Jing, Local unitary equivalence of generic multiqubits based on the CP decomposition, Inter. J. Theor. Phys. 61 (2022), 137 (20pp). 
\bibitem{LMV} L. D. Lathauwer, B. D. Moor, and J. Vandewalle, A multilinear singular value decomposition, SIAM J. Matrix Analysis Appl. 21, no. 4 (2000), pages 1253-1278.
\bibitem{CZCH} P. S. Choong, H. Zainuddin, K. T. Chan, and S. K. S. Husain, Higher-order singular value decomposition and the reduced density matrices of three qubits, Quantum Inf. Process. 19, 338 (2020).
\bibitem{CW} N. Christensen and A. Wong, A potential multipartide entanglement measure, Phys. Rev. A 63 (2001), 044301. 
\bibitem{LH} L. H. Lim. Tensors and hypermatrices. In Handbook of linear algebra ed. by L. Hogden, Chap 15, pages 231-260, 2013.
\bibitem{KB} T. G. Kolda and B. W. Bader, Tensor Decompositions and Applications, SIAM REVIEW, Vol. 51, n. 3, pages 455-500, 2009.
\bibitem{AY} A. Amanov and D. Yeliussizov, Tensor slice rank and Cayley's first hyperdeterminant. Linear Algebra Appl. 656 (2023), 224-46.
\bibitem{Qiao1} B. Liu, J. L. Li, X. Li, and C. F. Qiao,
Local unitary classification of arbitrary dimensional
multipartite pure states, Phys. Rev. Lett. 108 (2012), 050501.
\bibitem{Qiao2} J. L. Li and C. F. Qiao, Classification of arbitrary multipartite entangled
states under local unitary equivalence, J. Phys. A: Math. Theor. 46 (2013), 075301.
\bibitem{Sharma-Sharma} S. S. Sharma and N. K. Sharma, Local unitary invariants for N-qubit pure states, Phys. Rev. A 82, no. 5 (2010).
\bibitem{Li-Li} X. Li and D. Li. Relationship between the n-tangle and the residual entanglement of even n qubits. Quantum Inf. Comput. 10 (2010), 1018.







\end{thebibliography}

\end{document}